\documentclass[11pt] {article}

\usepackage[utf8]{inputenc}

\usepackage{amsmath,amsfonts,amsthm,amssymb,color}
\usepackage[usenames,dvipsnames,svgnames,table]{xcolor}
\definecolor{darkgreen}{rgb}{0.0,0,0.9}
\usepackage{mathtools}
\usepackage{authblk}
\usepackage{fullpage}
\usepackage{parskip}
\usepackage{comment}
\usepackage{tikz}
\usepackage{bbm}
\usepackage{dsfont}
\usepackage[sc]{mathpazo}
\usepackage[basic]{complexity}
\usepackage{algorithm2e}
\usepackage[colorlinks=true,
citecolor=OliveGreen,linkcolor=BrickRed,urlcolor=BrickRed,pdfstartview=FitH]{hyperref}
\usepackage[capitalize,nameinlink]{cleveref}
\usepackage{tcolorbox}
\usepackage[short]{optidef}

\newtcolorbox{wbox}
{
	colback  = white,
}

\SetKwInOut{Input}{Input}
\SetKwInOut{Output}{Output}
\SetKwFunction{Uncross}{\textsc{Uncross}}
\SetKwFunction{MergeUncross}{\textsc{MergeUncross}}
%\SetKwFunction{ConnectedComponents}{\textsc{ConnectedComponents}}
\SetKwFunction{PerfectMatching}{\textsc{PerfectMatching}}
\SetKwBlock{InParallel}{in parallel do}{end}
\SetKwFor{ParallelFor}{for}{in parallel do}{end}
%\crefname{algocf}{alg.}{algs.}
%\Crefname{algocf}{Algorithm}{Algorithms}

\newcommand*{\suppress}[1]{}

\makeatletter
\def\thm@space@setup{%
	\thm@preskip= 10pt
	\thm@postskip=\thm@preskip % or whatever, if you don't want them to be equal
}
\makeatother

\makeatletter
\renewcommand{\paragraph}{%
	\@startsection{paragraph}{4}%
	{\z@}{5pt}{-1em}%
	{\normalfont\normalsize\bfseries}%
}
\makeatother

%%%%%%%%%%%%%%%%%%%%%

%%%%%%%%%%%%%%%%%%%%%%%%%%%%%%%%%%%%%

\newtheorem{theorem}{Theorem}

\newtheorem{corollary}{Corollary}

\theoremstyle{definition}

%%%%%%%%%%%%%%%%%%%%%%%%%%%%%%%%%%%

\newenvironment{fminipage}%
{\begin{Sbox}\begin{minipage}}%
		{\end{minipage}\end{Sbox}\fbox{\TheSbox}}

%%%%%%%%%%%%%%%%%%%%

\begin{comment}
	
\usepackage{amsthm}
\usepackage{amsmath, amssymb, mathtools}
\usepackage{microtype}
\usepackage[vlined,ruled,algo2e,linesnumbered]{algorithm2e}
\usepackage{xcolor}
\usepackage{tikz}
\usepackage{fullpage}

\newtheorem{theorem}{Theorem}
\end{comment}

% Bibliography
\RequirePackage[numbers]{natbib}
\bibliographystyle{abbrvnat}

\title{Two-Sided Matching Markets: Impossibility Results on \\
Existence of Efficient and Envy Free Solutions\footnote{Supported in part by NSF Grant CCF-2230414.}}

\date{\today}
\author{Thorben Tr\"obst \and Vijay V.\ Vazirani}

\begin{document}

\maketitle

\abstract{

The Hylland-Zeckhauser gave a classic pricing-based mechanism (HZ) for a one-sided matching market; it yields allocations satisfying Pareto optimality and envy-freeness \citep{hylland}, and the mechanism is incentive compatible in the large \citep{He2018pseudo}. They also studied the exchange extension of HZ and gave an example showing that it may not even admit an equilibrium. 

In this paper, we consider two models of two sided matching markets: when utility functions are symmetric and when they are non-symmetric. We ask if these models always admit allocations satisfying the two basic properties of Pareto efficiency and envy freeness. Our results are negative. A corollary of the former result is a negative result for non-bipartite matching markets as well. 
}

\bigskip
\bigskip
\bigskip
\bigskip
\bigskip
\bigskip
\bigskip
\bigskip
\bigskip
\bigskip
\bigskip
\bigskip
\bigskip
\bigskip
\bigskip
\bigskip
\bigskip
\bigskip
\bigskip
\bigskip
\bigskip
\bigskip
\bigskip
\bigskip

\pagebreak

\section{Introduction}
\label{sec.intro}

The classic Hylland-Zeckhauser (HZ) mechanism \cite{hylland} for a one-sided matching market uses the power of the pricing mechanism to find allocations satisfying several desirable properties, including Pareto optimality and envy-freeness \citep{hylland}, and the mechanism is incentive compatible in the large \citep{He2018pseudo}. Its drawback is that the problem of computing an HZ equilibrium is intractable: \cite{VY-HZ} showed that computing an $\epsilon$-approximate equilibrium is in the class PPAD and \cite{Chen-Y} showed the corresponding hardness result. 

In a sense, HZ is the most elementary matching market model under cardinal utilities; its status within matching markets is analogous to that of linear Fisher markets in General Equilibrium Theory. In the latter theory, the next level of generality is achieved by the linear Arrow-Debreu model, also known as the linear exchange model. In this model, agents have initial endowments of goods and they trade them, at equilibrium prices, in order to obtain bundles having higher utilities. Hylland and Zeckhauser studied the exchange extension of HZ as well, but gave an example showing that it may not even admit an equilibrium. 

That essentially put an end to the exploration of equilibria in matching market models; the only exception we are aware of is \cite{Echenique2019constrained}, which established a proof of existence of equilibrium in a hybrid model made up of a convex combination of an HZ and an exchange extension of HZ, provided there is a non-zero amount of the former. 

In this paper, we rekindle the original line of exploration by studying the case of two sided matching markets. We consider two models --- when utility functions are symmetric and when they are non-symmetric --- and ask if these models always admit allocations satisfying the two basic properties of Pareto efficiency and envy freeness. 

Let us take the two sides in such models to be agents and jobs, with each entity having a cardinal utility function over the set of $n$ entities on the other side. Note that an allocation in either model is a fractional perfect matching in the complete graph on agents and jobs. Our results are negative: for both models, we give instances for which we prove that any allocation fails to be either Pareto efficiency or envy free. We contrast our negative results with known positive results for the case of dichotomous, i.e., $0/1$, utility functions.

%\section{Our Methodology}
%\label{sec.methodology}

\section{Our Results}
\label{sec.results}

\begin{theorem}\label{thm:asymmetric}
For two-sided matching markets under asymmetric utilities, a Pareto-optimal and envy-free allocation does not always exist, even for the case of dichotomous, i.e., $0/1$, utility functions. 
\end{theorem}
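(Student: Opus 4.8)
The plan is to give a single $3\times3$ instance in which envy-freeness forces a unique allocation, and then observe that this allocation is Pareto-dominated. Take agents $a_1,a_2,a_3$ and jobs $j_1,j_2,j_3$, and let every participant have exactly one ``favorite'' on the other side, which it values at $1$ (everything else at $0$): agents $a_1$ and $a_2$ both desire job $j_1$; agent $a_3$ desires $j_3$; job $j_1$ desires agent $a_3$; job $j_2$ desires $a_1$; job $j_3$ desires $a_2$. An allocation is a doubly stochastic matrix $x=(x_{ij})$, where $x_{ij}$ is the fraction to which $a_i$ is matched to $j_j$; the utility of $a_i$ is $\sum_j u_{a_i}(j)x_{ij}$ (here just $x_{i1}$ for $i\in\{1,2\}$ and $x_{33}$ for $i=3$), and the utility of $j_j$ is $\sum_i v_{j_j}(i)x_{ij}$ (namely $x_{31},x_{12},x_{23}$ for $j_1,j_2,j_3$). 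Envy-freeness asks that no agent prefer another agent's row and no job prefer another job's column.

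First I would show that the uniform matrix $x^\star$ with all entries $\tfrac13$ is the \emph{only} envy-free allocation. Since $a_1$ and $a_2$ both value only $j_1$, mutual non-envy gives $x_{11}=x_{21}$, and $a_1$ not envying $a_3$ gives $x_{11}\ge x_{31}$. Following the ``favorite'' relation around its $4$-cycle $a_3\to j_3\to a_2\to j_1\to a_3$ produces the further non-envy inequalities $x_{31}\ge x_{33}$ (job $j_1$ vs.\ $j_3$), $x_{33}\ge x_{23}$ (agent $a_3$ vs.\ $a_2$), and $x_{23}\ge x_{21}$ (job $j_3$ vs.\ $j_1$). Chaining these, $x_{31}\ge x_{33}\ge x_{23}\ge x_{21}=x_{11}\ge x_{31}$, so all five of these entries coincide; the column-$1$ sum $x_{11}+x_{21}+x_{31}=1$ pins the common value to $\tfrac13$, and the remaining row and column sums then force every entry of $x$ to equal $\tfrac13$. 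Conversely $x^\star$ is clearly envy-free (each participant is indifferent among all rows, resp.\ columns), so it is indeed the unique envy-free allocation.

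Next I would check that $x^\star$ is not Pareto optimal, by exhibiting the doubly stochastic matrix
\[
y=\begin{pmatrix} 1/3 & 2/3 & 0\\ 1/3 & 1/3 & 1/3\\ 1/3 & 0 & 2/3\end{pmatrix}.
\]
Under $y$ the agents' utilities are $y_{11}=\tfrac13,\ y_{21}=\tfrac13,\ y_{33}=\tfrac23$ and the jobs' utilities are $y_{31}=\tfrac13,\ y_{12}=\tfrac23,\ y_{23}=\tfrac13$; compared with $x^\star$ (where everyone gets $\tfrac13$), agent $a_3$ and job $j_2$ are strictly better off and no one is worse off. Hence $y$ Pareto-dominates $x^\star$. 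As $x^\star$ is the only envy-free allocation and it is not Pareto optimal, no allocation is simultaneously Pareto optimal and envy free, which is Theorem~\ref{thm:asymmetric}.

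I expect the real work to be in \emph{finding} this instance, not in verifying it. The design principle is to make the non-envy inequalities close up into a cycle of inequalities, forcing equalities and collapsing the envy-free region to a single point; a pure $3$-cycle (agents wanting the identity matching, jobs wanting a cyclic shift of it) does collapse the region, but to the point where all utilities equal $\tfrac12$, which turns out to be Pareto optimal. The essential twist is to add a \emph{second} agent competing for $j_1$: this drives the forced common value down from $\tfrac12$ to $\tfrac13$ and thereby leaves Pareto slack in the off-cycle entries (e.g.\ $x_{12}$ and $x_{33}$ can be increased to help $j_2$ and $a_3$). The only genuinely fiddly part is verifying that the cyclic chain of envy inequalities, together with double stochasticity, really does leave no envy-free allocation other than $x^\star$.
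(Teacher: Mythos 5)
Your proof is correct: the five non-envy constraints you invoke are exactly the ones your utility pattern generates, the closed chain $x_{31}\ge x_{33}\ge x_{23}\ge x_{21}=x_{11}\ge x_{31}$ together with the column-$1$ sum forces all entries to $\tfrac13$, the uniform matrix is clearly envy-free, and your $y$ is doubly stochastic and Pareto-dominates it (strictly for $a_3$ and $j_2$). The paper proves the theorem by the same template---a $3\times 3$ dichotomous instance in which envy-freeness pins down enough of the allocation that one explicit fractional perfect matching dominates it---but with a different instance: there agent $1$ values only job $4$, job $4$ values only agent $2$, agent $2$ values jobs $5$ and $6$, and the remaining three participants have identically zero utilities. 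In that instance the envy-free allocations form a one-parameter family, so the paper does not establish uniqueness; it only argues $x_{24}=\tfrac13$ and then $x_{14}=\tfrac13$, which already suffices because the dominating allocation raises agent $1$ from $\tfrac13$ to $\tfrac23$ while no other utility can drop. Your instance instead collapses the envy-free region to the single uniform matrix via the cycle of favorites plus the competing agent, which gives a slightly stronger conclusion (the unique envy-free allocation is Pareto-dominated) and a clean linear-algebraic verification, at the cost of using six unit-utility edges and all of the doubly stochastic constraints, versus the paper's four nonzero utilities and two short local arguments. Either route yields the theorem.
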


\begin{proof}
    Consider the instance shown in Figure~\ref{fig:asymmetric} and the Pareto-optimal fractional
    perfect matching $y$ depicted in that figure.
    Let $x$ be some allocation (i.e.\ fractional perfect matching) in this instance and assume that
    $x$ is envy-free.
    We will show that $y$ is strictly Pareto-better than $x$.

    First let us show that $x_{24} = \frac{1}{3}$.
    Note that we must clearly have $x_{24} \geq \frac{1}{3}$ as otherwise $x_{25} > \frac{1}{3}$ or
    $x_{26} > \frac{1}{3}$ and in those cases agent $4$ would envy agent $5$ or $6$ respectively.
    On the other hand, assume that $x_{24} = \frac{1}{3} + \epsilon$.
    Then $u_2(x) \leq \frac{2}{3} - \epsilon$.
    But then agent 2 envies either agent 1 or agent 3 since among these three, one must get at least
    $\frac{2}{3}$ of agents 5 and 6.
    Thus $x_{24} = \frac{1}{3}$ as claimed.

    Next we claim that $x_{14} = \frac{1}{3}$.
    Again, we clearly have $x_{14} \geq \frac{1}{3}$ as otherwise agent 1 would envy agent 2 or
    agent 3.
    But in the other direciton, if $x_{14} = \frac{1}{3} + \epsilon$, then $x_{15} + x_{16} =
    \frac{2}{3} - \epsilon$.
    By the previous claim, we know that $x_{25} + x_{26} = \frac{2}{3}$ and so $x_{35} + x_{36} =
    \frac{2}{3} + \epsilon$ which would imply that agent 2 envies agent 3.
    Thus $x_{14} = \frac{1}{3}$.

    Finally, since $x_{24} = \frac{1}{3}$ and $x_{14} = \frac{1}{3}$, we can see that $y$ is
    Pareto-better than $x$ (regardless of how $x$ assigns the other edges).
    In particular, $u_1(y) = \frac{2}{3}$ whereas $u_1(x) = \frac{1}{3}$ and we have $u_i(y) \geq
    u_i(x)$ for all other $i$.
\end{proof}

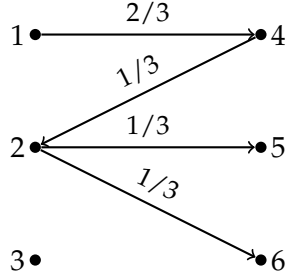
\begin{figure}[htb]
    \centering
    \begin{tikzpicture}[scale=1.5]
        \node[circle, fill, inner sep=1.5pt] (v1) at (0, 0) {};
        \node[circle, fill, inner sep=1.5pt] (v2) at (0, -1) {};
        \node[circle, fill, inner sep=1.5pt] (v3) at (0, -2) {};

        \node[circle, fill, inner sep=1.5pt] (w1) at (2, 0) {};
        \node[circle, fill, inner sep=1.5pt] (w2) at (2, -1) {};
        \node[circle, fill, inner sep=1.5pt] (w3) at (2, -2) {};

        \node[left] at (v1) {$1$};
        \node[left] at (v2) {$2$};
        \node[left] at (v3) {$3$};

        \node[right] at (w1) {$4$};
        \node[right] at (w2) {$5$};
        \node[right] at (w3) {$6$};

        \draw[->, thick] (v1) -- node[midway, above, sloped] {\small $2/3$} (w1);
        \draw[->, thick] (w1) -- node[midway, above, sloped] {\small $1/3$} (v2);
        \draw[->, thick] (v2) -- node[midway, above, sloped] {\small $1/3$} (w2);
        \draw[->, thick] (v2) -- node[midway, above, sloped] {\small $1/3$} (w3);
    \end{tikzpicture}
    \caption{Shown is a counterexample for dichotomous asymmetric weights. Each arrow represents a
    utility 1-edge from one side. All other utilities are 0. The edge labels represent a
    Pareto-optimal solution $y$ (assume that $y$ is a fractional perfect matching by filling up with
    utility 0 edges).}
    \label{fig:asymmetric}
\end{figure}

\bigskip

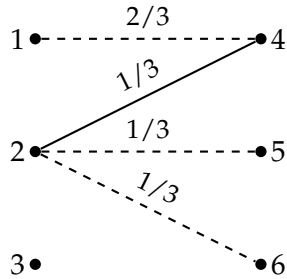
\begin{figure}[htb]
    \centering
    \begin{tikzpicture}[scale=1.5]
        \node[circle, fill, inner sep=1.5pt] (v1) at (0, 0) {};
        \node[circle, fill, inner sep=1.5pt] (v2) at (0, -1) {};
        \node[circle, fill, inner sep=1.5pt] (v3) at (0, -2) {};

        \node[circle, fill, inner sep=1.5pt] (w1) at (2, 0) {};
        \node[circle, fill, inner sep=1.5pt] (w2) at (2, -1) {};
        \node[circle, fill, inner sep=1.5pt] (w3) at (2, -2) {};

        \node[left] at (v1) {$1$};
        \node[left] at (v2) {$2$};
        \node[left] at (v3) {$3$};

        \node[right] at (w1) {$4$};
        \node[right] at (w2) {$5$};
        \node[right] at (w3) {$6$};

        \draw[-, dashed, thick] (v1) -- node[midway, above, sloped] {\small $2/3$} (w1);
        \draw[-, thick] (w1) -- node[midway, above, sloped] {\small $1/3$} (v2);
        \draw[-, dashed, thick] (v2) -- node[midway, above, sloped] {\small $1/3$} (w2);
        \draw[-, dashed, thick] (v2) -- node[midway, above, sloped] {\small $1/3$} (w3);
    \end{tikzpicture}
    \caption{Shown is a counterexample for symmetric weights. Dashed edges have utility 1, whereas
    solid edges have utility 2. All other utilities are 0. The edge labels represent a
    Pareto-optimal solution $y$ (assume that $y$ is a fractional perfect matching by filling up with
    utility 0 edges).}
    \label{fig:symmetric}
\end{figure}

\bigskip

\begin{theorem}\label{thm:symmetric}
For two-sided matching markets under symmetric utilities, a Pareto-optimal and envy-free allocation does not always exist, even if the utility functions are over $\{0, 1, 2\}$.
\end{theorem}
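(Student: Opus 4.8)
The plan is to follow the template of the proof of Theorem~\ref{thm:asymmetric}, now applied to the instance in Figure~\ref{fig:symmetric}. Recording the (symmetric) utilities: agent~$1$ values job~$4$ at~$1$; agent~$2$ values job~$4$ at~$2$ and each of jobs~$5,6$ at~$1$; agent~$3$ and every other pair value everything at~$0$; by symmetry job~$4$ values agents~$1,2$ at~$1,2$ and jobs~$5,6$ value agent~$2$ at~$1$. I would let $x$ be an arbitrary envy-free allocation and show that the depicted allocation~$y$ (with $y_{14}=\tfrac23$, $y_{24}=y_{25}=y_{26}=\tfrac13$, and the remaining mass on utility-$0$ edges) is strictly Pareto-better than~$x$; this suffices, since it shows that no envy-free allocation can be Pareto-optimal, hence no allocation is both.

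The heart of the argument is a short chain of envy inequalities that pins $x$ down on the relevant edges. Since agents~$1$ and~$3$ have positive value only for job~$4$, envy-freeness of agent~$1$ toward agents~$2$ and~$3$ gives $x_{14}\ge x_{24}$ and $x_{14}\ge x_{34}$, hence $x_{14}\ge\tfrac13$. Next I would use that every bundle has total mass~$1$: agent~$2$'s bundle sums to~$1$, so $u_2(x)=2x_{24}+x_{25}+x_{26}=1+x_{24}$, while agent~$2$'s value for agent~$1$'s bundle is $2x_{14}+x_{15}+x_{16}=1+x_{14}$; thus envy-freeness of agent~$2$ toward agent~$1$ forces $x_{24}\ge x_{14}$, so in fact $x_{14}=x_{24}\ge\tfrac13$. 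Finally, envy-freeness of jobs~$5$ and~$6$ toward job~$4$ gives $x_{25}\ge x_{24}$ and $x_{26}\ge x_{24}$; since $x_{24}+x_{25}+x_{26}=1$ with each term at least~$\tfrac13$, all three equal~$\tfrac13$, whence $x_{14}=\tfrac13$ and $x_{34}=1-x_{14}-x_{24}=\tfrac13$.

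Comparing utilities then finishes it: $u_1(x)=x_{14}=\tfrac13<\tfrac23=u_1(y)$ and $u_4(x)=x_{14}+2x_{24}=1<\tfrac43=u_4(y)$, while $u_2(x)=1+x_{24}=\tfrac43=u_2(y)$, $u_3(x)=0=u_3(y)$, and $u_5(x)=x_{25}=\tfrac13=u_5(y)$, $u_6(x)=x_{26}=\tfrac13=u_6(y)$, so $y$ is strictly Pareto-better than~$x$. The step I expect to be the crux --- and the essential difference from the asymmetric case --- is the middle one: one must invoke envy on \emph{both} sides of the market and exploit the mass-one identity $u_2(x)=1+x_{24}$ to conclude $x_{14}=x_{24}$ \emph{exactly}. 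A mere lower bound $x_{14}\ge\tfrac13$ would not be enough, since $u_2(y)=\tfrac43$ leaves no slack; this is also the reason the $(2,4)$ edge carries weight~$2$ rather than~$1$.
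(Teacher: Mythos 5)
Your proof is correct and takes essentially the same route as the paper: the same instance and the same allocation $y$, with envy-freeness used to pin $x$ down to $x_{14}=x_{24}=x_{25}=x_{26}=\tfrac13$ on the positive-utility edges, followed by a strict Pareto comparison against $y$. The only difference is the order of the envy arguments --- the paper first fixes $x_{24}=\tfrac13$ via job-side envy and then $x_{14}$, whereas you chain $x_{14}\ge x_{24},x_{34}$ (agent $1$), $x_{24}\ge x_{14}$ (agent $2$ via the mass-one identity $u_2(x)=1+x_{24}$), and $x_{25},x_{26}\ge x_{24}$ (jobs $5$, $6$) --- which if anything handles more cleanly the fact that job $4$ also derives value from agent $1$.
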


\begin{proof}
    Consider the instance shown in Figure~\ref{fig:symmetric} together with the depicted
    Pareto-optimal allocation $y$.
    Let $x$ be some envy-free allocation.
    We aim to show that $y$ is Pareto-better than $x$.

    First, we can once again see that $x_{24} = \frac{1}{3}$.
    Note that if $x_{24} < \frac{1}{3}$, then agent 4 will envy agent 5 or agent 6.
    Vice versa, if $x_{24} > \frac{1}{3}$, then agent 5 or 6 will envy agent 4.

    Next, note that $x_{14} = \frac{1}{3}$.
    Again, we must have $x_{14} \geq \frac{1}{3}$ since otherwise agent 1 would enyv agent 2 or
    agent 3.
    In the other direction, we cannot have $x_{14} > \frac{1}{3}$ since then agent 2 would envy
    agent 1 by the previous observation that $x_{24} = \frac{1}{3}$.

    Finally, we must have that $x_{25} = x_{26} = \frac{1}{3}$ since otherwise agent 5 would envy
    agent 6 or vice versa.
    This determines $x$ on all the edges with positive utility and thus all the utility values.
    But now we can see that $y$ is Pareto-better than $x$ since $u_1(y) > u_1(x)$ and $u_i(y) \geq
    u_i(x)$ for all other $i$.
\end{proof}

A non-bipartite matching market generalizes a two-sided matching markets under symmetric utilities. The graph of the latter inherently does not have odd cycles and therefore odd set constraints simply don't enter the picture. Hence we get:  

\begin{corollary}
	\label{cor.non-bip}
	For non-bipartite matching markets, a Pareto-optimal and envy-free allocation does not always exist, even if the utility functions are over $\{0, 1, 2\}$.
\end{corollary}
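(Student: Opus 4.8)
The plan is to obtain \cref{cor.non-bip} as an immediate consequence of \cref{thm:symmetric}, via the observation that a two-sided matching market with symmetric utilities is literally a special case of a non-bipartite matching market. First I would pin down the model: in a non-bipartite matching market the entities are the vertices of an arbitrary graph $G$, each vertex $v$ carries a symmetric cardinal utility over the edges incident to it, and an \emph{allocation} is a point of the perfect matching polytope of $G$ — i.e.\ a vector $x \in \Rplus^{E(G)}$ with $\dv{x} = 1$ for every vertex $v$, together with Edmonds' odd-set (blossom) inequalities. Pareto optimality and envy-freeness are defined exactly as in the two-sided model.

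Next I would invoke the classical fact that when $G$ is bipartite the odd-set inequalities are redundant: the perfect matching polytope of a bipartite graph is already cut out by the degree constraints $\dv{x}=1$ alone (equivalently, Birkhoff--von Neumann / the integrality of the bipartite degree-constrained polytope). Hence, regarding the underlying complete bipartite graph on agents and jobs as an instance of a non-bipartite matching market does not change the feasible set of allocations, nor the utilities, nor the definitions of Pareto optimality and envy-freeness. The two optimization problems are therefore identical.

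Finally I would simply reuse the instance of \cref{fig:symmetric}: viewed as a graph on six vertices with the indicated symmetric $\{0,1,2\}$-valued utilities, it is a legitimate non-bipartite matching market instance, and by \cref{thm:symmetric} it admits no allocation that is simultaneously Pareto-optimal and envy-free. This establishes \cref{cor.non-bip}.

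The only point needing a moment's care — more a sanity check than a real obstacle — is the equality of feasible sets: one must confirm that embedding the instance into the general model neither enlarges the allocation set (it does not, since $G$ has no odd vertex subsets and so no blossom constraint is ever binding) nor shrinks it (every fractional perfect matching of the bipartite graph remains a valid allocation). Once this identification is spelled out, \cref{thm:symmetric} transfers verbatim, and no further argument is required.
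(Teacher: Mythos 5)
Your proposal is correct and follows essentially the same route as the paper: the authors likewise observe that a two-sided market with symmetric utilities is a special case of a non-bipartite matching market because the bipartite graph has no odd cycles, so the odd-set constraints are vacuous and the instance of \cref{fig:symmetric} together with \cref{thm:symmetric} transfers directly. Your extra remarks about the perfect matching polytope and Birkhoff--von Neumann simply spell out the same observation in more detail.
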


This raises the question of two-sided matching markets under symmetric dichotomous utilities and non-bipartite matching markets under dichotomous utilities. For the former, existence of equilibrium was shown in \cite{Bogomolnaia-2004-random} and a polynomial time algorithm for finding an equilibrium follows by applying the methods of \cite{VY-HZ}, who give such an algorithm for HZ under dichotomous utilities. For the latter, existence of equilibrium was shown in \cite{Roth-Kidney} and a polynomial time algorithm for finding an equilibrium was given in \cite{Li-2014-egalitarian}. Note that Theorem \ref{thm:asymmetric} also stands in contrast with the result of \cite{Bogomolnaia-2004-random}.

\bibliography{refs}

\end{document}